\DeclareMathOperator*{\argmax}{arg\,max}
\newtheorem{proposition}[]{Proposition}
\begin{document}

\title{Trajectory and Resource Optimization for UAV Synthetic Aperture Radar\\
\thanks{This work was supported in part by the Deutsche Forschungsgemeinschaft (DFG, German Research Foundation) GRK 2680 – Project-ID 437847244.}
}
\author{\IEEEauthorblockN{Mohamed-Amine~Lahmeri\IEEEauthorrefmark{1},
Walid Ghanem\IEEEauthorrefmark{1}, Christina Knill\IEEEauthorrefmark{2}, and
Robert Schober\IEEEauthorrefmark{1}}\\
\IEEEauthorblockA{\IEEEauthorrefmark{1}Friedrich-Alexander-Universit\"at Erlangen-N\"urnberg, Germany\\
\IEEEauthorrefmark{2}Ulm University, Ulm, Germany\\
Emails:\{amine.lahmeri, walid.ghanem, robert.schober\}@fau.de,
christina.knill@uni-ulm.de}}

\maketitle

\begin{abstract}
In this paper, we study the trajectory and resource optimization for lightweight rotary-wing unmanned aerial vehicles (UAVs) equipped with a synthetic aperture radar (SAR) system. The UAV's mission is to perform SAR imaging of a given area of interest (AoI). In this setup, real-time communication with a base station (BS) is required to facilitate live mission planning for the drone. For this purpose, a non-convex mixed-integer non-linear program (MINLP) is formulated such that the UAV resources and three-dimensional (3D) trajectory are jointly optimized for maximization of the drone radar ground coverage. We present a low-complexity sub-optimal algorithm based on successive convex approximation (SCA) for solving the problem, and perform a finite search to optimize the total distance traversed by the UAV for maximal coverage. We show that the proposed 3D trajectory planning achieves at least 70\% improvement in radar ground coverage compared to benchmark schemes employing constant powers for communication or radar imaging. We also show that positioning the BS near the AoI can significantly improve the radar coverage of the UAV.
\end{abstract}
\vspace{-3mm}
\section{Introduction}
Since their invention in 1922, RAdio Detection And Ranging (radar) systems have been widely used in numerous applications, including detection, tracking, and forecasting\cite{b00}. 
For imaging, radar systems are known to be resilient to weather conditions and to be able to operate day and night, which is not the case for conventional optical imaging, where clouds can block the view and lack of light significantly degrades image quality\cite{b0}. \par
With the current growth of the drone industry, the use of unmanned aerial vehicles (UAVs) is becoming easier than ever. This is confirmed by the widespread use of commercial drones in a plethora of applications such as aerial photography, delivery, surveillance, and communication\cite{b01}. Although small, lightweight UAVs are limited by their battery capacity \cite{b44}, their flexibility has motivated the deployment of synthetic aperture radar (SAR) systems onboard. Despite the fact that spaceborne SAR systems have been used onboard high-altitude platforms for more than 30 years \cite{b0}, their deployment onboard small, lightweight commercial UAVs is a relatively new idea that is currently undergoing testing and study.\par 
For UAV-SAR systems, the main research focus has been on radar signal processing \cite{b15}, while only few results are available for trajectory design. Experimentally, the trajectories are designed manually or automatically using waypoint coverage planning. Unlike classical trajectory planning problems for drones, for UAV-SAR systems, the trajectory design depends on the operating mode of the radar. The most commonly used trajectories are circular trajectories and linear trajectories for stripmap mode SAR. For instance, back and forth linear motion has been used in \cite{b4}, while circular SAR was used in \cite{b6}. Path planning concepts for rough areas were proposed in \cite{b12} and a multi-objective optimization problem for minimization of the distance traveled by the UAV was formulated and solved in \cite{b13}. However, in these works, communication with a ground base station (BS) was not accounted for and a geosynchronous UAV (GEO-UAV) was considered, where the area was illuminated by a GEO-SAR satellite. Thus, the path planning problem was different from the one considered in this paper. \par  
Although UAVs could avoid transmitting raw radar data to the ground by storing it onboard and then processing it offline, for live mission planning real-time communication with a BS is needed to allow the UAV to make adjustments based on the radar data.\footnote{Adjustments may include the redefinition of the UAV AoI, and restarting or stopping the current mission of the UAV-SAR after evaluating the processed radar image.} Enabling reliable communication  between drones and BSs has been widely studied in the general UAV literature but not for SAR systems \cite{b1,b2}. One of the few works that investigated the transmission of radar data to a ground BS was \cite{b11}, where the free-space path loss model was used to characterize the air-to-ground channel. However, in \cite{b11}, a bi-static passive UAV-SAR system was considered, which is different from the SAR system studied in this work.  The authors of \cite{b5} used a GHz wireless local area network (WLAN) to connect a UAV-SAR system with the ground. However, in \cite{b11,b5}, UAV trajectory design and resources optimization were not considered.\par
In this paper, we develop a resource allocation scheme for the optimization of the radar ground coverage of a UAV-SAR system performing terrain surface imaging. To allow live mission planning, real-time communication with the ground is assumed. Motivated by Boustrephedon coverage planning\cite{b9}, the SAR system operates in the stripmap mode and the UAV adopts a linear-shaped trajectory. In this framework, the three-dimensional (3D) trajectory and the resources of the UAV are jointly optimized to achieve maximum radar ground coverage. Our main contributions can be summarized as follows: 
\begin{itemize}
    \item We investigate trajectory and resource allocation optimization for a low-altitude UAV-SAR system, where raw radar data is transmitted to the ground in real-time, for the first time in the literature.
    \item We present a novel framework for modelling the SAR and communication subsystems and exploit it to formulate an optimization problem for drone coverage maximization.  	
    \item  We propose a low-complexity sub-optimal solution for the formulated non-convex  mixed-integer non-linear program  (MINLP) based on successive convex approximation (SCA).
    \item  We demonstrate that the proposed solution outperforms three benchmark schemes in terms of radar ground coverage.  We also show that placing the BS near the area of interest (AoI) significantly improves the radar coverage of the 
    UAV-SAR system.
\end{itemize}
\textit{Notations}: 
In this paper, lower-case letters $x$ refer to scalar numbers, while boldface lower-case letters $\mathbf{x}$ denote vectors and boldface upper-case letters $\mathbf{X}$ refer to matrices.  $\{a, ..., b\}$ denotes the set of all integers between $a$ and $b$ and $|\cdot|$ denotes the absolute value operator. For given sets $\mathcal{X}_1$ and $\mathcal{X}_2$, the set $\mathcal{X}_1 \setminus \mathcal{X}_2$  is the set that comprises all elements of $\mathcal{X}_1$ that are not in $\mathcal{X}_2$.  $\mathbb{N}$ represents the set of natural numbers while $\mathbb{R}^{N \times 1}$ represents the set of all $N \times 1$ vectors with real values entries. For a vector $\mathbf{x}\in\mathbb{R}^{N \times 1} $, $\mathbf{x}^T$ stands for the transpose of the vector. For a real matrix of size $N$,  $ \mathbf{A} \in \mathbb{R}^{N\times N}$, $\mathbf{A}\succeq \mathbf{0} $ indicates that $\mathbf{A}$ is a positive semi-definite matrix.  For a real-valued function $f(x)$, $f'(x)$ denotes the derivative of $f$.
\section{System Model}
We consider a small rotary-wing UAV equipped with a side-looking radar antenna designed to perform SAR. The main mission of the UAV is to identify and scan a certain AoI on the ground. Due to the computational constraints of the UAV and the need for live mission monitoring, the UAV transmits data to the ground in real time. The data comprises the raw radar data as well as other information needed to process the radar signal such as synchronization and localization data.
\begin{figure}[]
    \centering
    \captionsetup{justification=centering,margin=2cm}
    \includegraphics[width=6in]{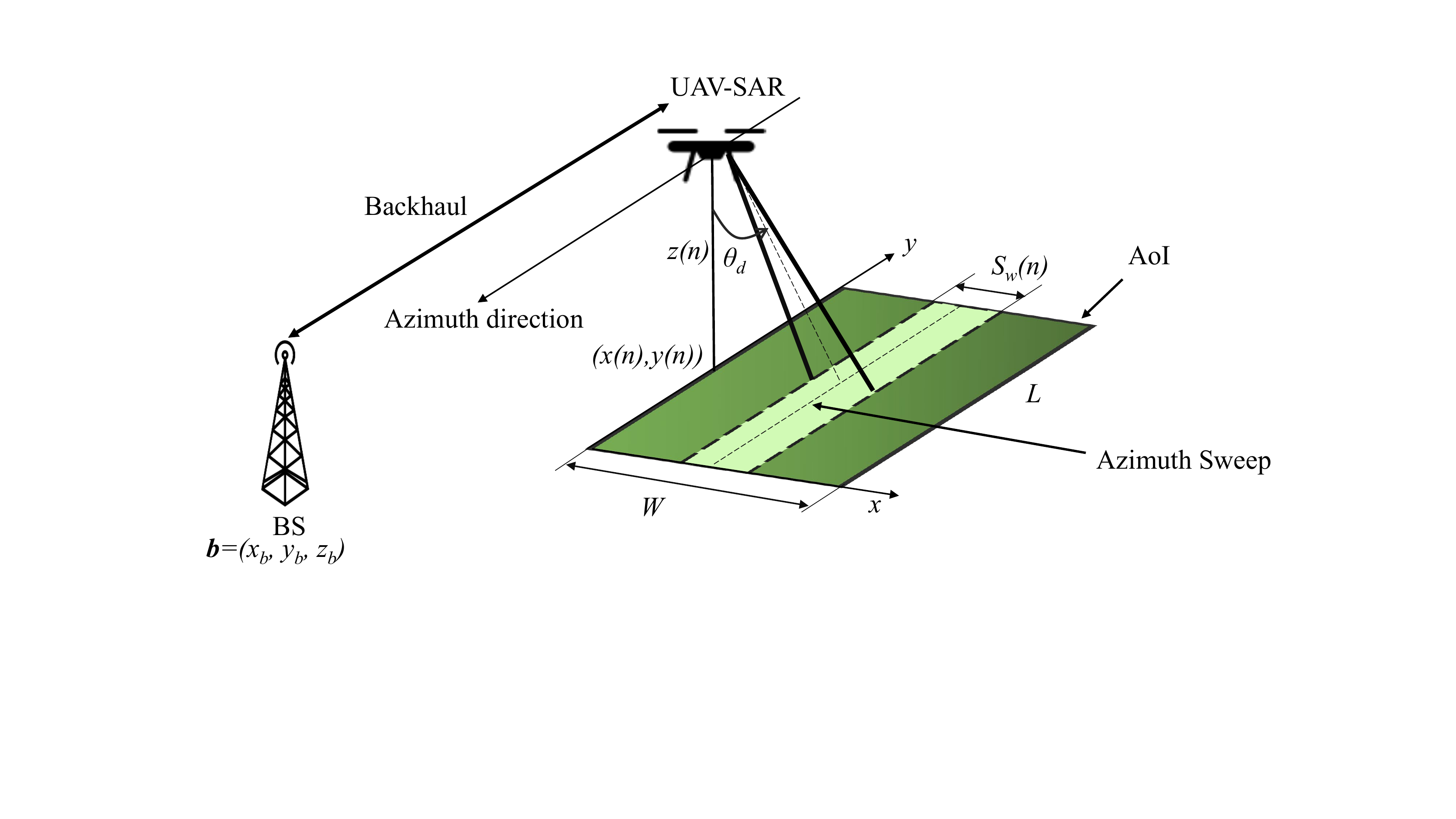}
    \caption{System model for UAV-SAR system with a backhaul communication link.}
    \label{fig:SystemModel}
\end{figure}
\subsection{Trajectory Design}
\label{Sec:TrajectoryDesign}
We assume that the AoIs are rectangular with given length $L$.  Without loss of generality, we place the origin of the adopted coordinate system at that corner of the AoI, where the UAV mission will start, see Fig. \ref{fig:SystemModel}.  We suppose that the drone  flies at constant velocity $v$. The $y$-axis represents the azimuth direction while the $x$-axis defines the range direction, as shown in Fig. \ref{fig:SystemModel}. 
We discretize time such that the length of the area scanned in $y$-direction in time slot $n$ of duration $\delta_t$, denoted by $\Delta_y$, is given by
\begin{equation}
   \Delta_y =\delta_t v .
\end{equation}
The UAV trajectory consists of a set of azimuth sweeps\footnote{An azimuth sweep is the trajectory of the drone in azimuth direction from one edge of the AoI to the other, i.e., from  $y=0$ to $y=L$ and vice versa, see Fig. \ref{fig:SystemModel}.}  where successive ground radar scans are adjacent such that there are no gaps in the coverage. To mathematically characterize the linear trajectory of the UAV-SAR, we define a set of time slots  $\mathcal{A}$ as follows:
\begin{equation}
    \mathcal{A}=\{ k\in \{1,...,NM\} |  (k-1) \;\mathrm{mod} \; M \neq 0 \},
\end{equation}
where $N$ denotes the total number of the drone azimuth sweeps, while $M$ is the number of time slots for each drone azimuth sweep.  $\Delta_y$ is set such that $M=\frac{L}{\Delta_y}$ is an integer. Consequently,  set $\mathcal{A}^c$ is defined as follows:
\begin{equation}
    \mathcal{A}^c=\{1,...,NM\}\setminus \mathcal{A}= \{1, M+1, 2M+1,...,(N-1)M+1 \}.
\end{equation}
The drone location in time slot $n \in \{1,...,NM\}$ is defined by $\mathbf{u}(n)=(x(n),y(n),z(n)) $, where $x(n)$ and $y(n)$ represent the position in the $xy$-plane, while $z(n)$ denotes the drone altitude. The UAV makes its turns when it reaches the edge of the AoI at the end of an azimuth sweep. Then, it moves to the $y$-position for the next azimuth sweep as defined later in (\ref{eq:turn}). 
To account for the limitations of practical radar systems, sets $\mathcal{A}$ and $\mathcal{A}^c$ will be used to (i) force a back-and-forth linear motion and (ii) set all of the radar parameters, such as the radar power and altitude, constant during a given sweep \cite{b5,b6}. Set $\mathcal{A}^c$ contains the indices of the first time slots at the beginning of the azimuth sweeps where the radar parameters can be adjusted, whereas set $\mathcal{A}$ contains the indices of all other time slots, where the parameters are fixed. The following constraints on the drone trajectory are imposed: 
\begin{gather}
    \mathrm{C1}: x(1)=-\tan(\theta_1)z(1),\\\label{eq:turn}
\mathrm{C2}: x(n)=x(n-1)+z(n-1)\tan(\theta_2)-z(n)\tan(\theta_1), \forall n \in \mathcal{A}^c\setminus\{1\},\\
    \mathrm{C3}: x(n+1)=x(n), \forall n \in \mathcal{A},\\
    \mathrm{C4}: z(n+1)=z(n), \forall n \in \mathcal{A}.
\end{gather}
Here, angles $\theta_1$ and $\theta_2$ are given by: 
\begin{gather}
    \theta_1= \theta_d-\frac{\Theta_{3\mathrm{dB}}}{2},\\
        \theta_2=\theta_d+\frac{\Theta_{3\mathrm{dB}}}{2},
\end{gather}
where $\Theta_{3\mathrm{dB}}$ is the radar antenna 3dB  beamwidth and $\theta_d$ is the radar depression angle.
Constraint $\mathrm{C2}$ ensures that two successive radar scans of the ground generated by two successive drone sweeps are adjacent without gap. Constraint $\mathrm{C3}$ imposes a linear motion, while constraint $\mathrm{C4}$ ensures a fixed altitude during each drone sweep. Forcing a linear motion for the stripmap UAV-SAR does not mean that the trajectory is fixed. In fact, at each turn, the drone altitude $z$ and the $x$-positions can be adjusted. Only the $y$-positions are fixed such that back and forth stripmap mode motion is guaranteed. The $y$-positions are defined as follows: 
\begin{equation}
y(1)=0,  \; y(n+1)=y(n) + c(n) \Delta_y, \hspace{1mm } \forall n \in \{1,...,NM-1\},\\
\end{equation}where vector $\mathbf{c} =[c(1),..., c(NM) ]^T \in \mathbb{R}^{NM\times 1}$ is defined as follows: 
\begin{equation}
    \mathbf{c}= [ \overbrace{\underbrace{1,...,1}_\text{$M$ times},\underbrace{-1,...,-1,}_\text{$M$ times}...]^T}^\text{$N\times M$ times},
\end{equation}

Considering these constraints, the shape of a typical trajectory is shown in Fig. \ref{fig:trajectory}.
\subsection{Radar Imaging}

Since we operate at low altitudes, the maximal imageable area on the ground is limited by the antenna beamwidth. Therefore, the radar swath width in time slot $n$, $S_w(n)$, is given as follows: 
\begin{equation}
    S_w(n)=z(n) \left( \tan(\theta_2)-\tan(\theta_1)\right).
\end{equation}
 Consequently, the infinitesimal area covered in time slot $n$ is $\Delta_y S_w(n)$ and the total ground area covered by the drone, denoted by $C$, is the sum of all infinitesimal coverage areas: 
\begin{equation}
    C=\sum_{n=1}^{NM} \Delta_y S_w(n)=\sum_{n=1}^{NM} \Delta_y z(n) \left( \tan(\theta_2)-\tan(\theta_1)\right).
\end{equation}
The data rate produced by SAR sensing, in the time slot $n$, can be expressed as \cite{b7}:
\begin{equation}
    R_{\mathrm{min}}(n)= \;B_r \left( \frac{2\;z(n)\;\Omega}{c}+\tau_p \right) \mathrm{PRF}\hspace{2mm} [\mathrm{bit/s}],
\end{equation}
where $B_r$ is the radar bandwidth, $\tau_p$ is the radar pulse duration, $c$ is the speed of light, PRF is the radar pulse repetition frequency, and $\Omega$ is given by: 
\begin{equation}
\Omega=\frac{\cos(\theta_1)-\cos(\theta_2)}{\cos(\theta_1)\cos(\theta_2)}.
\end{equation}
Furthermore, the radar signal-to-noise ratio (SNR) is given as follows \cite{b8}:
\begin{equation}
    \mathrm{SNR}(n)=\frac{P_{\mathrm{sar}}(n)\; G_t\; G_r \;\lambda^3 \;\sigma_0 \;c \;\tau_p \mathrm{PRF}\;\sin^2(\theta_d)}{(4\pi)^4 \;z^3(n) \;k \;T_o \;NF \;\;B_r \;L_{\mathrm{tot}} \;v },
\end{equation}
where $P_{\mathrm{sar}}(n)$ is the radar transmit power in time slot $n$,  $G_t$ and $G_r$ are the radar antenna gains for transmission and reception, respectively,  $\lambda$ is the radar wavelength, $T_o$ is the equivalent noise temperature, $k$ is Boltzmann's constant, $L_{\mathrm{tot}}$
represents the combined losses, $NF$ is the system noise figure, and $\sigma_o$ is the backscattering coefficient. 
\subsection{UAV Data Link}
We denote the location of the BS by $\mathbf{b}=(x_b,y_b,z_b)$. Consequently, the distance between the drone and the BS is given by:
\begin{equation}
d(n)= ||\mathbf{u}(n)-\mathbf{b} ||_2=\sqrt{(x(n)-x_b)^2+(y(n)-y_b)^2+(z(n)-z_b)^2}.
\end{equation}
We suppose that the UAV  altitude is sufficiently high to allow obstacle-free communication with the BS over a Line-of-Sight (LoS) link. Thus, based on the free space path loss model, the instantaneous backhaul throughput from the UAV to the BS is given by: 
\begin{equation}
    R(n)= B_{c} \; \log_2\left(1+\frac{P_{\mathrm{com}}(n) \;\gamma}{d^2(n)}\right),
\end{equation}
where $P_{\mathrm{com}}(n)$ is the power allocated for  communication in time slot $n$, $B_{c}$ is the communication bandwidth, and $\gamma$ is the reference channel gain divided by the noise variance. 
\\To guarantee successful real-time communication between the drone and the BS, the following inequality must be satisfied at all times: 
\begin{gather}
   \mathrm{C8}:  R(n) \geq R_{\mathrm{min}}(n)+R_{\mathrm{sl}}, \forall \; n,
\end{gather}
where $R_{\mathrm{sl}}$ is the rate needed to transmit the synchronization and localization data  necessary for successful radar image processing.

\section{Resource Allocation and Trajectory Optimization Framework}
In this section, we present the problem formulation for maximizing the radar ground coverage and then explain our proposed solution to the problem. 
\subsection{Problem Formulation}
Our objective is the maximization of the drone ground coverage $C$ while satisfying the constraints imposed by the radar and communication subsystems. Thus, we optimize the drone trajectory $
\{ \mathbf{z},\mathbf{x},N\}$ and power allocation $\{\mathbf{q}, \mathbf{P_{\mathrm{com}}}, \mathbf{P}_{\mathrm{sar}}\}$, where vectors $\mathbf{z}$, $\mathbf{x}$, $\mathbf{q}$, $ \mathbf{P}_{\mathrm{com}}$, and $\mathbf{P}_{\mathrm{sar}}$,  represent the collections of all $z(n), x(n), q(n), P_{\rm com}(n), P_{\rm sar}(n), \forall n$, respectively, and $q(n)$ is the energy available in the drone's battery in time slot $n$. To this end, the following optimization problem is formulated:
\begin{subequations}
\begin{alignat*}{2}
&\!(\mathrm{P.1}):\max_{\mathbf{z},\mathbf{x},\mathbf{q}, \mathbf{P_{\mathrm{sar}}}, \mathbf{P_{\mathrm{com}}},N} \hspace{3mm} C       & \qquad&  \\
&\text{s.t.} \hspace{3mm}\mathrm{C1},\mathrm{C2},\mathrm{C3},\mathrm{C4},\mathrm{C8} &   & \; \\
&  \mathrm{C5:}P_{\mathrm{sar}}(n+1)=P_{\mathrm{sar}}(n),\hspace{1mm}  \forall n \in \mathcal{A},                &      &  \\ & \mathrm{C6}: \;z_{\mathrm{min}}\leq z(n) \leq z_{\mathrm{max}}, \hspace{1mm} \forall n,                 &      &  \\&  \mathrm{C7}:\mathrm{SNR}(n) \geq \mathrm{SNR_{min}}, \hspace{1mm} \forall n,                &      &     
 \\&    \mathrm{C9}: \;0\leq P_{\mathrm{sar}}(n)\leq P_{\mathrm{sar}}^{\mathrm{max}}, 0\leq P_{\mathrm{com}}(n)\leq P_{\mathrm{com}}^{\mathrm{max}} ,\hspace{1mm} \forall n,              &      & \\
& \mathrm{C10}: q(1)= q_{\mathrm{start}},\;q(n) \geq 0, \hspace{1mm} \forall n,     & & \\
& \mathrm{C11}: q(n+1) = q(n) - \delta_t\left(P_{\mathrm{com}}(n)+P_{\mathrm{sar}}(n)+P_{\mathrm{prop}} \right), \; \forall n.     & &  
\end{alignat*}
\end{subequations}
Constraints $\mathrm{C1}$, $\mathrm{C2}$, $\mathrm{C3}$, and $\mathrm{C4}$ define the shape of the UAV trajectory as described in Section \ref{Sec:TrajectoryDesign}. Constraint $\mathrm{C5}$ fixes the radar power along each azimuth sweep.  Constraint $\mathrm{C6}$ determines the allowed range for the UAV's operational altitude. Constraint $\mathrm{C7}$ ensures the minimum SNR,  $\mathrm{SNR_{min}}$, required for SAR imaging is achieved. Constraint $\mathrm{C8}$ guarantees successful real-time transmission of the UAV's onboard data to the BS. Constraint $\mathrm{C9}$ ensures that the radar and communication transmit power are non-negative and do not exceed the maximum permissible levels, respectively. Constraints $\mathrm{C10}$ and $\mathrm{C11}$ specify the energy level available at the start of the mission, $q_{\mathrm{start}}$, and in time slot $n$, respectively. Here, the propulsion power required for maintaining the operation of the drone, denoted by $P_{\mathrm{prop}}$, is assumed to be constant since the drone velocity is fixed.\par
Problem (P.1) is a non-convex MINLP  due to the non-convex constraint C8 and integer variable $N \in \mathbb{N}$, which does not only affect the objective function but also the dimension of all other optimization variables. Therefore, $N$ cannot be optimized jointly with the remaining variables. Furthermore, non-convex MINLPs are usually difficult to solve. Thus, in the next section, we propose an SCA based sub-optimal algorithm for the considered problem. 
\subsection{Solution of Optimization Problem}
We start by fixing variable $N$ and denote problem $\mathrm{(P.1)}$ for a given $N$ by $\mathrm{\widetilde{(P.1)}}$:
\begin{subequations}
\begin{alignat*}{2}
&\!\mathrm{\widetilde{(P.1)}}:\max_{\mathbf{z},\mathbf{x},\mathbf{q}, \mathbf{P_{\mathrm{sar}}}, \mathbf{P_{\mathrm{com}}}} \hspace{3mm} C       & \qquad&  \\
&\text{s.t.} \hspace{3mm}\mathrm{C1-C11}. &   & \; 
\end{alignat*}
\end{subequations}
However, even for a given $N$, problem $\mathrm{\widetilde{(P.1)}}$ is non-convex and difficult to solve. Thus, we use SCA to determine a low-complexity sub-optimal solution for problem $\mathrm{\widetilde{(P.1)}}$ in an iterative manner. Then, we prove that the set of feasible $N$ is finite and  perform a search for the optimal $N^*$. In what follows, we discuss the steps needed to approximate problem $\mathrm{\widetilde{(P.1)}}$ as a convex optimization problem. \par 
As a first step, we  transform constraint C7, which involves a cubic function of drone altitude $z$, using second-order cone programming. To this end, we introduce  slack variables $\psi(n)$ stacked in vector $\boldsymbol{\psi}=[\psi(1),...,\psi(NM)]^T\in \mathbb{R}^{NM\times 1}$, constant $\beta$,  and reformulate C7 as follows \cite{b10}: 
\begin{gather}
    \exists \psi(n) \geq 0 \in \mathbb{R}, \widetilde{\mathrm{C7a}}:\begin{bmatrix}
\psi(n) & z(n) \\
z(n) & 1 
\end{bmatrix} \succeq \mathbf{0},\forall n,\\ \widetilde{\mathrm{C7b}}: \begin{bmatrix}
P_{\rm sar}(n) \beta & \psi(n) \\
\psi(n) & z(n) 
\end{bmatrix} \succeq \mathbf{0},   \forall n ,
\end{gather}
where $\beta=\frac{ G_t\; G_r\;\lambda^3 \;\sigma_0 \;c \;\tau_p \mathrm{PRF}\;\sin^2(\theta_d)}{(4\pi)^4 \;k \;T_o \;NF \;\;B_r \;L_{\mathrm{tot}} \;v\; \mathrm{SNR_{min}} }$. Furthermore, constraint C8 can be transformed, simplified, and equivalently rewritten as follows: 
\begin{equation}
\label{eq:C8}
     {\rm C8}: \left( A2^{\alpha z(n)} -1\right)\left((x(n)-x_b)^2+(z(n)-z_b)^2+(y(n)-y_b)^2 \right)\leq P_{\mathrm{com}}(n)\gamma, \forall n,
\end{equation}
where $A=2^{\frac{B_r\;\tau_p\;\mathrm{PRF} }{B_c}}$ and $\alpha=\frac{2 \;\Omega \;B_r\; \mathrm{PRF}}{c \;B_c}$.
For a given $y(n)$ and positive $A$ and $\alpha$, the term $\left( A2^{\alpha z(n)} -1\right) \allowbreak(y(n)-y_b)^2$ in C8 is convex with respect to $z(n)$, since:
\begin{equation}
\frac{\partial ^2}{\partial z^2(n)}\left( A2^{\alpha z(n)} -1\right)(y(n)-y_b)^2= \alpha^2 A2^{\alpha z(n)} \log^2(2)(y(n)-y_b)^2 > 0.
\end{equation}
Based on its Hessian matrix, it can be shown that term $ h_1(z(n), x(n))=\left( A2^{\alpha z(n)} -1\right)(x(n)-x_b)^2$ in C8 is not convex. Thus, we rewrite function $ h_1( z(n), x(n))$ as follows:
\begin{equation}
    h_1\left(z(n),x(n)\right)=\frac{1}{2}\left( \left(f_1(z(n))+f_2(x(n))\right)^2-f_1(z(n))^2-f_2(x(n))^2\right),\label{eq: ProductLinearization}
\end{equation}
where $f_1(z(n))=(A2^{\alpha z(n)} -1)$ and $f_2(x(n))=(x(n)-x_b)^2$ are both convex functions. In (\ref{eq: ProductLinearization}), function $h_1(z(n),x(n))$ is written as a difference of convex functions. We use  Taylor approximation to obtain convex lower bounds for $F_1(z(n))=f_1^2(z(n))$ and $F_2(x(n))=f_2^2(x(n))$ around points $z^{(j)}(n)$ and $x^{(j)}(n)$:
\begin{gather}
    F_1(z(n)) \geq F_1(z^{(j)}(n))+ F_1'(z^{(j)}(n))(z(n)-z^{(j)}(n))\label{eq:Taylor1}\\
    F_2(x(n)) \geq F_2(x^{(j)}(n))+ F_2'(x^{(j)}(n))(x(n)-x^{(j)}(n)),\label{eq:Taylor2}
\end{gather}
where $F_1'(z(n))=A\alpha \ln(2) 2^{\alpha z(n)+1}( 2^{\alpha z(n)}-1)$ and $ F_2'(x(n))=4(x(n)-x_b)^3$. Similarly, we rewrite the non-convex term $ h_2(z(n))= (A2^{\alpha z(n)}-1)(z(n)-z_b)^2$  as follows: 
\begin{equation}
        h_2(z(n))=\frac{1}{2}\left((f_3(z(n))+f_4(z(n)))^2-f_3(z(n))^2-f_4(z(n))^2\right),\label{eq:ProductLinea2}
\end{equation}
where convex lower bounds for functions $F_3(z(n))=f_3^2(z(n))=(A2^{\alpha z(n)}-1)^2$ and $F_4(z(n))=f_4^2(z(n))=(z(n)-z_b)^2$ are given by:
\begin{gather}
    F_3(z(n)) \geq F_3(z^{(j)}(n))+  F_3'(z^{(j)}(n))(z(n)-z^{(j)}(n)),\label{eq:Taylor3}\\
    F_4(z(n)) \geq F_4(z^{(j)}(n))+  F_4'(z^{(j)}(n))(z(n)-z^{(j)}(n)),\label{eq:Taylor4}
\end{gather}
where $ F_3'(z(n))=A\alpha \ln(2) 2^{\alpha z(n)+1}( 2^{\alpha z(n)}-1)$ and $  F_4'(z(n))=4(z(n)-z_b)^3$. The last term in C8, i.e., $ \left( A2^{\alpha z(n)} -1\right) (y(n)-y_b)$, is convex and no further transformation is required. We use (\ref{eq:Taylor1}) and (\ref{eq:Taylor2}) to provide a lower bound for (\ref{eq: ProductLinearization}) and (\ref{eq:Taylor3}) and (\ref{eq:Taylor4}) to provide a lower bound for (\ref{eq:ProductLinea2}). Then, we combine all terms and rewrite constraint C8 in a compact form as follows: 
\begin{gather}
     \sum_{i \in \{1,3\}}(f_i(\delta_i)+f_{i+1}(\delta_{i+1}))^2+\sum_{i=1}^4 F_i(\delta_i^{(j)}) +  F_i'(\delta_i^{(j)})(\delta_i-\delta_i^{(j)})\notag\\
    +(A2^{\alpha z(n)} -1) (y(n)-y_b)\leq P_{\mathrm{com}}(n)\gamma, \forall n,
 \end{gather}
 where $\delta_2=x(n)$  and $\delta_i=z(n)$, $\forall i \neq 2$.
 In order to successfully solve the problem, we introduce new slack vectors $\mathbf{t}=[t(1),...,t(NM)]^T\in \mathbb{R}^{NM\times 1}$, $\mathbf{u}=[u(1),...,u(NM)]^T\in \mathbb{R}^{NM\times 1}$, and two new constraints $\mathrm{\widetilde{C8a}}$ and $\mathrm{\widetilde{C8b}}$, and reformulate constraint C8 as follows: 
\begin{gather}
       \mathrm{\widetilde{C8 }}: t(n)^2+u(n)^2+\sum_{i=1}^4 F_i(\delta_i^{(j)}) +  F_i'(\delta_i^{(j)})(\delta_i^{(j)}-\delta_i)\notag\\
    +(A2^{\alpha z(n)} -1) (y(n)-y_b)\leq P_{\mathrm{com}}(n)\gamma, \forall n,\\
         \mathrm{\widetilde{C8a}}:f_1(z(n))+f_2(x(n))\leq t(n), \forall n,\\
   \mathrm{\widetilde{C8b}}: f_3(z(n))+f_4(z(n))\leq   u(n), \forall n.
\end{gather}
Now, problem $\mathrm{\widetilde{(P.1)}}$ can be approximated with the following convex optimization problem:
\begin{subequations}
\begin{alignat*}{2}
&\!\mathrm{(P.2)}:\max_{\mathbf{z},\mathbf{x},\mathbf{q}, \mathbf{P_{\mathrm{sar}}}, \mathbf{P_{\mathrm{com}}}} \hspace{3mm} C       & \qquad&  \\
&\text{s.t.} \hspace{3mm}\mathrm{C1-C6,\widetilde{C7a},\widetilde{C7b},\widetilde{C8},\widetilde{C8a},\widetilde{C8b}, C9-C11}. &   & \; 
\end{alignat*}
\end{subequations}
\begin{algorithm}
  \caption{Successive Convex Approximation}\label{euclid}
  \begin{algorithmic}[1] 
       \label{algorithm1} \State Set initial point, $(\mathbf{z}^{(1)},\mathbf{x}^{(1)})$, iteration index $j=1$, and error tolerance $0< \epsilon \ll 1$.
      \State \textbf{repeat}
             \State $  $Get coverage $ C^{(j)}, \hspace{1mm} \mathbf{z}, $ and $ \mathbf{x} $ by solving (P.2) for the point $ (\mathbf{z}^{(j)},\mathbf{x}^{(j)}).  $           \State$ $Set $ j=j+1,\hspace{1mm} \mathbf{ z}^{(j)}= \mathbf{z},\hspace{1mm} \mathbf{ x}^{(j)}=\mathbf{x}.$ 
            \State \textbf{until} $\big |\frac{C^{(j)}-C^{(j-1)}}{C^{(j)}}\big|\leq \epsilon$

       \State \textbf{return} solution $\mathbf{z}^*=\mathbf{z}^{(j)}$, $\mathbf{x}^*=\mathbf{x}^{(j)}$

  \end{algorithmic}
\end{algorithm}
 To solve $\mathrm{\widetilde{(P.1)}}$, we use SCA such that in each iteration $j$ of \textbf{Algorithm 1}, we  solve problem (P.2) and update the solution $\{\mathbf{z}^{(j)},\mathbf{x}^{(j)},\mathbf{q}^{(j)},\mathbf{P}_{\mathrm{sar}}^{(j)},\mathbf{P}_{\rm com}^{(j)}\}$. Following  \cite{convergence}, it can be shown hat \textbf{Algorithm 1} converges to a locally optimal solution and has polynomial time computational complexity. Problem (P.2) can be efficiently solved by classical convex program solvers such as CVX \cite{b10}. \par
\subsection{Optimal Number of Drone Sweeps $N^*$}
To solve Problem (P.1), we first solve problem $\mathrm{\widetilde{(P.1)}}$ for every fixed and feasible number of drone sweeps $N$, then we select the minimum number of azimuth sweeps that results in the maximum coverage to save drone energy.  In the following proposition, we show that the number of feasible $N$ is finite so that we can perform this exhaustive search. 
\begin{proposition}
\label{proposition}
The feasible number of azimuth sweeps $N$ is finite.
\end{proposition}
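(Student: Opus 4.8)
The plan is to derive a finite upper bound on $N$ directly from the battery dynamics, which are the only constraints whose effect accumulates uniformly over the mission horizon. The key observation is that the drone's stored energy strictly decreases by at least a fixed positive amount in every time slot: combining $\mathrm{C11}$ with the non-negativity of the transmit powers in $\mathrm{C9}$ gives $q(n)-q(n+1)=\delta_t\bigl(P_{\mathrm{com}}(n)+P_{\mathrm{sar}}(n)+P_{\mathrm{prop}}\bigr)\geq \delta_t P_{\mathrm{prop}}>0$, where the lower bound $\delta_t P_{\mathrm{prop}}$ is a constant independent of all optimization variables and of $N$ itself (here we use that $P_{\mathrm{prop}}$ is a fixed positive constant, as stated in the system model).

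First I would telescope this inequality from slot $1$ to slot $n$, using the initialization $q(1)=q_{\mathrm{start}}$ from $\mathrm{C10}$, to obtain $q(n)\leq q_{\mathrm{start}}-(n-1)\,\delta_t P_{\mathrm{prop}}$ for every $n\in\{1,\dots,NM\}$. Then I would invoke the feasibility requirement $q(n)\geq 0$ from $\mathrm{C10}$ at the terminal slot $n=NM$, which forces $q_{\mathrm{start}}-(NM-1)\,\delta_t P_{\mathrm{prop}}\geq 0$, i.e. $NM\leq 1+\dfrac{q_{\mathrm{start}}}{\delta_t P_{\mathrm{prop}}}$. Since $M=L/\Delta_y$ is a fixed positive integer, this yields $1\leq N\leq \dfrac{1}{M}\Bigl(1+\dfrac{q_{\mathrm{start}}}{\delta_t P_{\mathrm{prop}}}\Bigr)$, so the feasible values of $N$ form a finite set of integers, as claimed.

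The argument is short, so there is no serious obstacle; the only point requiring care is conceptual rather than technical. Because $N$ simultaneously sets the dimension of every optimization vector and enters the objective, one cannot bound it by examining the objective or a constraint whose influence depends on $N$; the bound must come from a constraint that imposes a per-slot cost that is uniform in $N$, and the constant propulsion drain $P_{\mathrm{prop}}>0$ together with $q_{\mathrm{start}}<\infty$ supplies exactly that. I would also remark that this shows only that the feasible set of $N$ is contained in a bounded interval of integers; non-emptiness of that set is a separate and mild assumption on the problem data (sufficient initial energy, admissible altitude range, etc.) that is implicitly presumed, and the exhaustive search over $N$ in the next subsection simply terminates once $\mathrm{(P.2)}$ becomes infeasible, which the bound above guarantees to happen after finitely many steps.
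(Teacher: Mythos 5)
Your proof is correct and follows essentially the same route as the paper's Appendix: lower-bounding the per-slot energy drop by $\delta_t P_{\mathrm{prop}}$ via $\mathrm{C9}$ and $\mathrm{C11}$, telescoping, and invoking $q(1)=q_{\mathrm{start}}$ and $q(NM)\geq 0$ from $\mathrm{C10}$ to obtain the identical bound $N \leq \frac{1}{M}\left(\frac{q_{\mathrm{start}}}{\delta_t P_{\mathrm{prop}}}+1\right)$. Your added remarks on why the bound must come from a constraint uniform in $N$, and on non-emptiness of the feasible set, are sensible but do not change the argument.
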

\begin{proof}
Please refer to Appendix. \ref{SecondAppendix}.
\end{proof}

\section{Simulation and Discussion}
\begin{table}[]
\centering
\caption{System parameters \cite{table1,table2,table3}. }
\begin{tabular}{|c|c|c|c|}
\hline
Parameter                & Value & Parameter    & Value   \\ \hline
$M$                       & 120   & $f$            & 2 Ghz   \\ \hline
$L$    & 50 m  & $\beta$         & $10^4$ W$^{-1}$ m$^3 $        \\ \hline
$z_{\rm max}$               & 100 m  & $\tau_p$        & 1 $\mu$ s        \\ \hline
$z_{\rm min}$                     & 5 m      & $B_r=B_c$         &   100 MHz      \\ \hline
$\Delta_y$ & 0.5 m  &   $\gamma$      &  20 dB\\ \hline
$\theta_d$ & 45$^o $     &   $q_{\rm start}$      & 100 Wh   \\ \hline
$\theta_{\rm 3db}$                  & 30$^o$     &$P_{\rm prop}$  & 140 W   \\ \hline
SNR$_{\rm min}$             & 20 dB &$v$    & 5 m/s    \\ \hline
PRF                      & 100 Hz      &$P_{\rm sar}^{\rm max}$         &  46 dBm   \\ \hline
$P_{\rm com}^{\rm max}$& 40 dBm &$R_{\rm sl}$ &1 kbit\\ \hline
\end{tabular}
\label{Tab:System}
\end{table}
\begin{figure}[]
    \centering
    \includegraphics[width=4in]{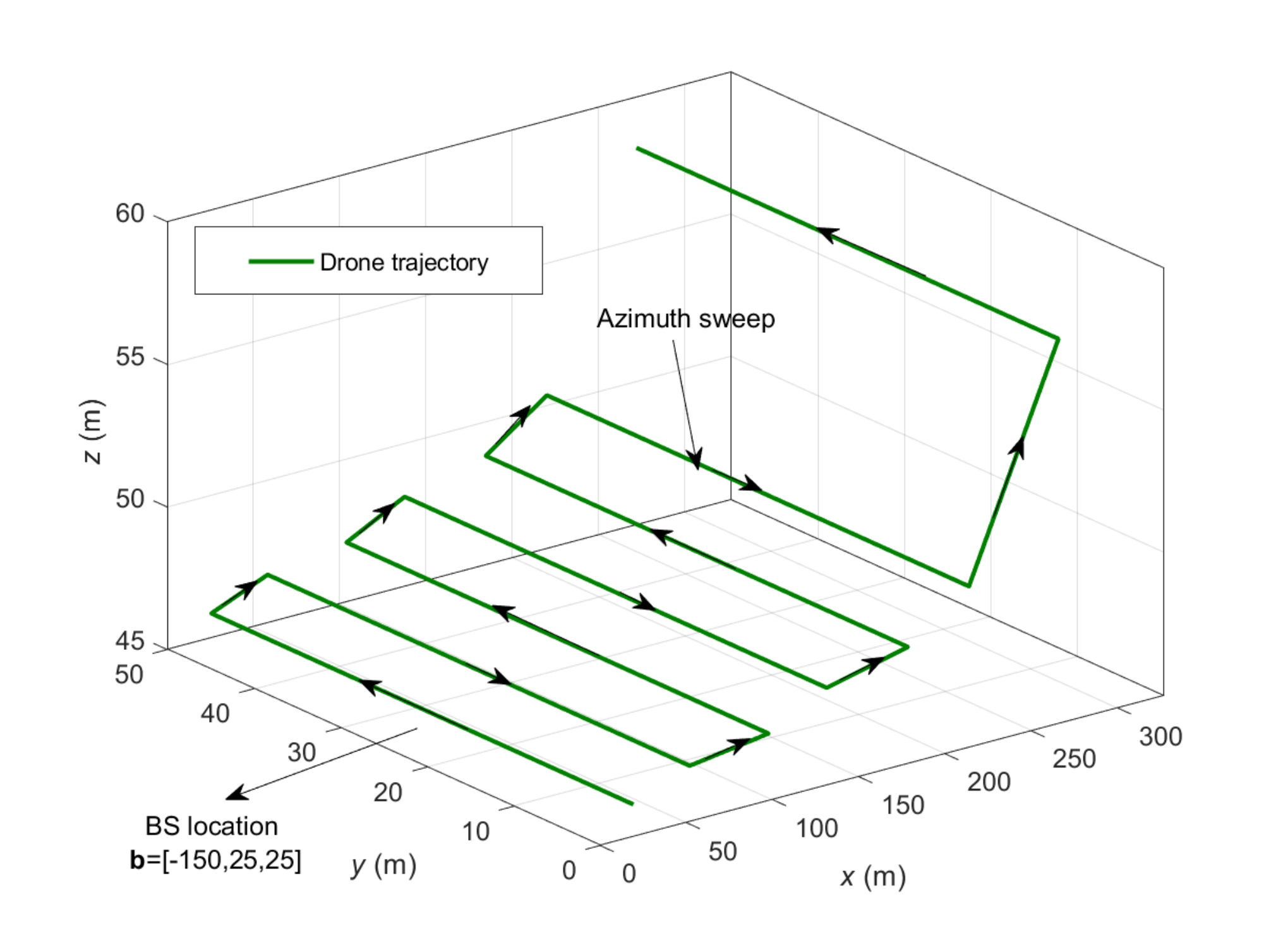}
    \caption{Optimized 3D UAV trajectory for stripmap mode UAV-SAR.}
    \label{fig:trajectory}
\end{figure}
In this section, we present simulation results for the 3D UAV trajectory, the resource allocation, and the impact of the BS placement on the UAV-SAR performance. We also compare the proposed solution with different benchmark schemes. \par
\subsection{Benchmark Schemes}
As relevant benchmark schemes are not available in the literature, we compare the proposed scheme to the following benchmark schemes:
\begin{itemize}
    \item \textbf{Benchmark scheme 1:} For this scheme, we solve (P.1) using a fixed but optimized drone altitude in all time slots.  
        \item \textbf{Benchmark scheme 2:} For this scheme, we fix the power for communication $P_{\rm com}$ in all time slots and optimize it along with the other optimization variables by solving a modified version of (P.1).
    \item \textbf{Benchmark scheme 3:}  For this scheme, we fix the UAV-SAR radar transmit power $P_{\rm sar}$ for all drone azimuth sweeps and solve again a modified version of (P.1).
\end{itemize}
\subsection{Simulation Results }
 Figure \ref{fig:trajectory} and Figure \ref{fig:Coverage} show the drone trajectory and the radar ground coverage, respectively, whereas Figure \ref{fig:CommunicationPower}  and Figure \ref{fig:RadarPower} provide insights into the UAV-SAR resource allocation. The adopted simulation parameters are summarized in Table \ref{Tab:System}.\par
For Figure \ref{fig:trajectory}, the BS is located at $\mathbf{b}=[-150, 25, 25]$ m, and we plot the trajectory of the drone over the AoI. The optimal number of drone sweeps is $N^*=7$. The shape of the drone's trajectory shows the back and forth motion imposed by constraints C1-C4. The change in the drone altitude  can be explained by the relationship between the $x$ and $z$ coordinates. Flying at high altitude results in a wider swath width and better coverage but also imposes large UAV turns. For a given number of UAV turns, flying at a high altitude at the start of the mission results in moving quickly far away from the BS, where successful communication with the BS may not be possible as throughput $R(n)$ decreases with the UAV-BS distance. Thus, to ensure successful backhaul communication, the UAV starts flying at a relatively low altitude (46 m) and increases its altitude considerably at the end of the mission (58 m) to maximize the radar coverage, i.e., the objective function of (P.1). We note that, for the case depicted in Figure \ref{fig:trajectory}, the coverage is not limited by the total available energy $q_{\rm start}$, but by the UAV-SAR's ability to meet constraint C8 as it moves far away from the BS.\par
\begin{figure}[]
    \centering
    \includegraphics[width=4in]{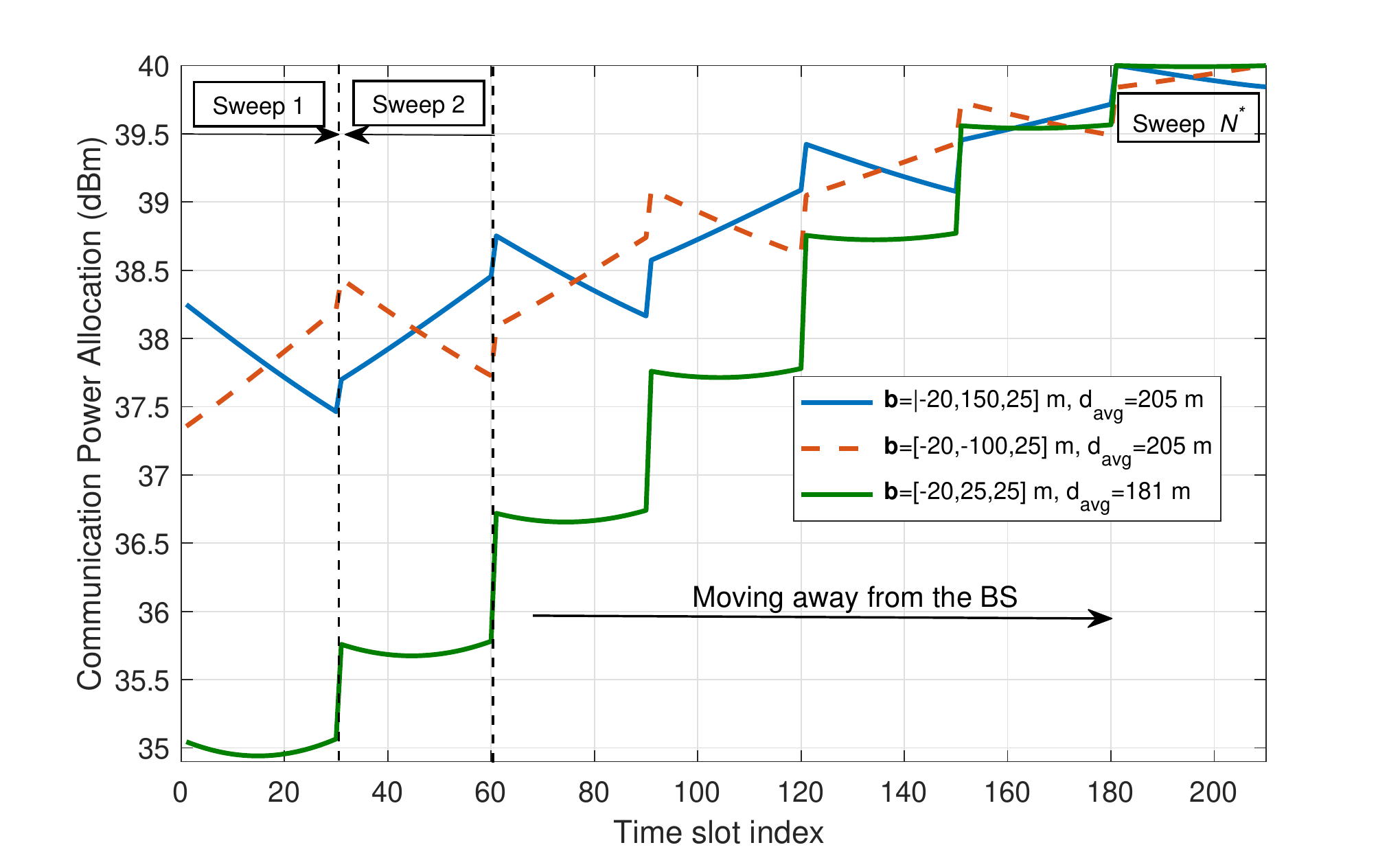}
    \caption{Communication power allocation.}
    \label{fig:CommunicationPower}
\end{figure}
In Figure \ref{fig:CommunicationPower}, we show the power allocated for communication as a function of time for different locations of the BS. As the drone moves away from the BS, more power is required for communication to compensate for the increasing path loss such that C8 can be met. The green solid line corresponds to the case where the BS is located at the line crossing the middle of the AoI.  Therefore, the power allocated is almost constant during each azimuth sweep. This is because there is no significant change in the UAV-BS distance during an azimuth sweep. When the BS is placed to the left or to the right of the AoI, represented by the solid blue line and the dashed red line, respectively, the power slowly increases during sweeps where the drone moves away from the BS, and then in the following sweeps, it slowly decreases as the UAV moves towards the BS.  For these cases, the average communication power consumption is higher than for the case when the BS is located in the middle. This is because the average distance to the BS, defined as $d_{\mathrm{avg}}=\frac{1}{NM}\sum\limits_{n=1}^{NM} d(n)$, is smaller when the BS is located in the middle of the AoI, i.e.  at $\mathbf{b}=[-20,25,25]$ m. We therefore conclude that the positioning of the BS relative to the AoI has a significant impact on the power consumed for communication.\par
\begin{figure}[]
    \centering
    \includegraphics[width=4in]{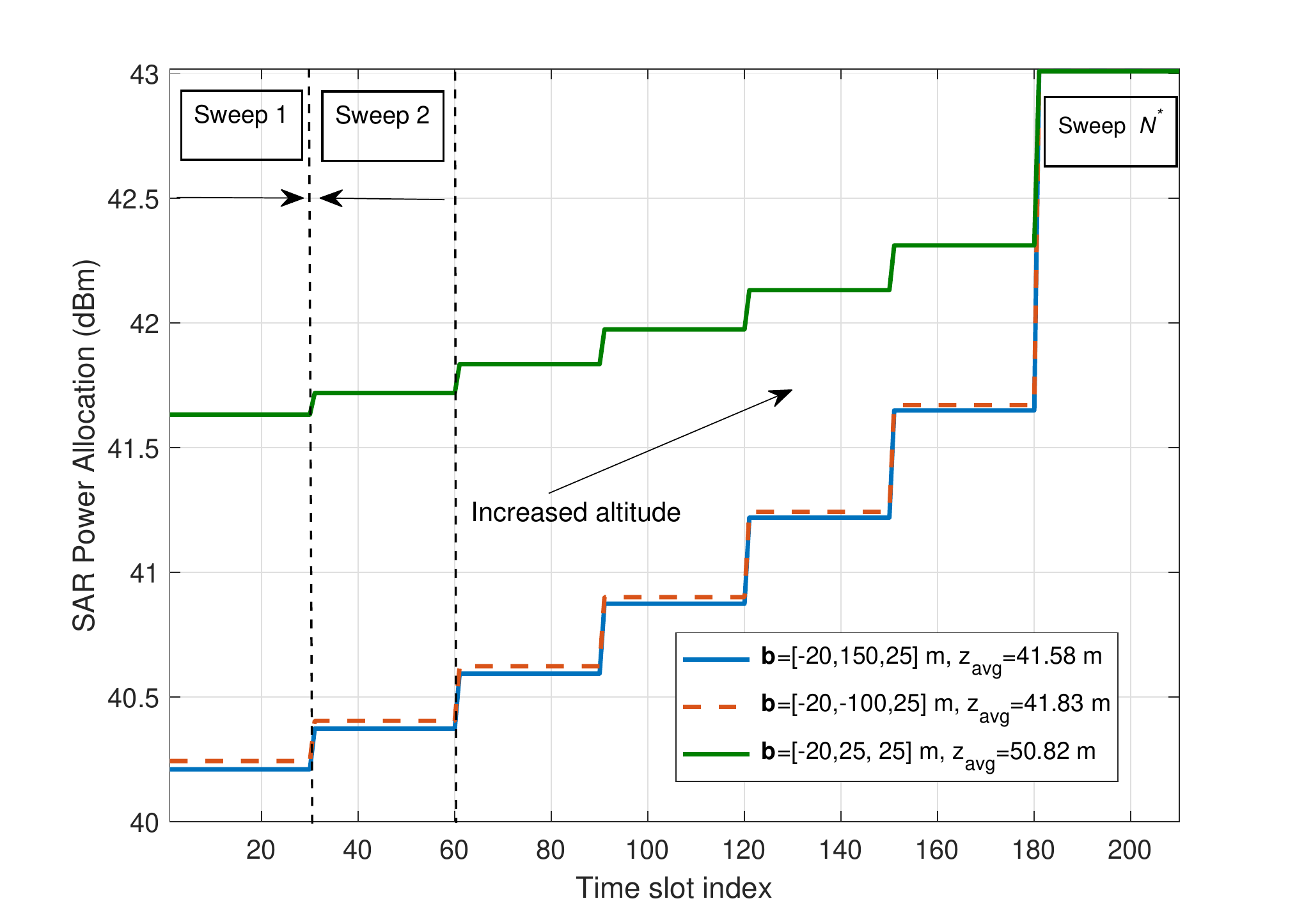}
    \caption{Radar power allocation.}
    \label{fig:RadarPower}
\end{figure}
In Figure \ref{fig:RadarPower}, we depict the radar power allocation as a function of the time slot index. Unlike the communication power profile, the radar power is not impacted by the distance to the BS and only increases with the altitude of the UAV as the minimum SNR required for SAR has to be satisfied. Thus, the radar power during each azimuth scan is constant. The power allocated to the SAR system increases from one sweep to the next as the drone altitude continuously increases, cf. Figure  \ref{fig:trajectory}.  On average, the power allocated for the radar system is higher when the BS is placed at $[-20, 25, 25]$ m since the average flying altitude, defined by $z_{\mathrm{avg}}=\frac{1}{NM}\sum\limits_{n=1}^{NM} z(n)$, is higher than for the two other BS placement.\par

In Figure \ref{fig:Coverage}, we plot the coverage obtained with \textbf{Algorithm 1} as a function of the number of sweeps $N$ for different BS placements. This figure confirms the previous conclusion that the distance to the BS  significantly affects the coverage of the drones. For example, when the BS is placed at $\mathbf{b}=[-200, 80, 25]$ m, the maximum coverage is approximately $0.4\times 10^{4}\;\mathrm{m}^2$ whereas when the BS is placed closer to the AoI, i.e., at $\mathbf{b}=[-150, 80, 25] $ m, the maximum radar coverage increases to $ 1.8 \times 10^4 \; \mathrm{m}^2$. Furthermore, we observe that the coverage increases with the number of sweeps until the optimal number of azimuth sweeps $N^*$ is reached. Then, the radar coverage starts to decrease  and finally drops to zero. This is due to the limited battery capacity of the drone, $q_{\rm start}$. Problem (P.1) becomes infeasible if $N$ is chosen too large, see Appendix \ref{SecondAppendix}.   Figure \ref{fig:Coverage} also reveals that the proposed scheme outperforms  the considered benchmark schemes. The area covered with the proposed scheme is more than twice as large as that for benchmark scheme 2. More importantly, the proposed solution achieves this larger coverage with fewer sweeps $(N^*=8)$ compared to benchmark scheme 2 $(N^*=19)$, hence the drone has to spend less energy.  We also observe a 70\% improvement in radar coverage compared to benchmark scheme 3. This improvement was achieved at the cost of total 9 sweeps ($N^*=9)$ compared to fewer sweeps ($N^*=5)$ for benchmark scheme 3 . Moreover, the proposed 3D trajectory provides a 1000 m$^2$ improvement  in radar coverage compared to benchmark scheme 1 employing a constant UAV altitude.

\begin{figure}[]
    \centering
    \includegraphics[width=4in]{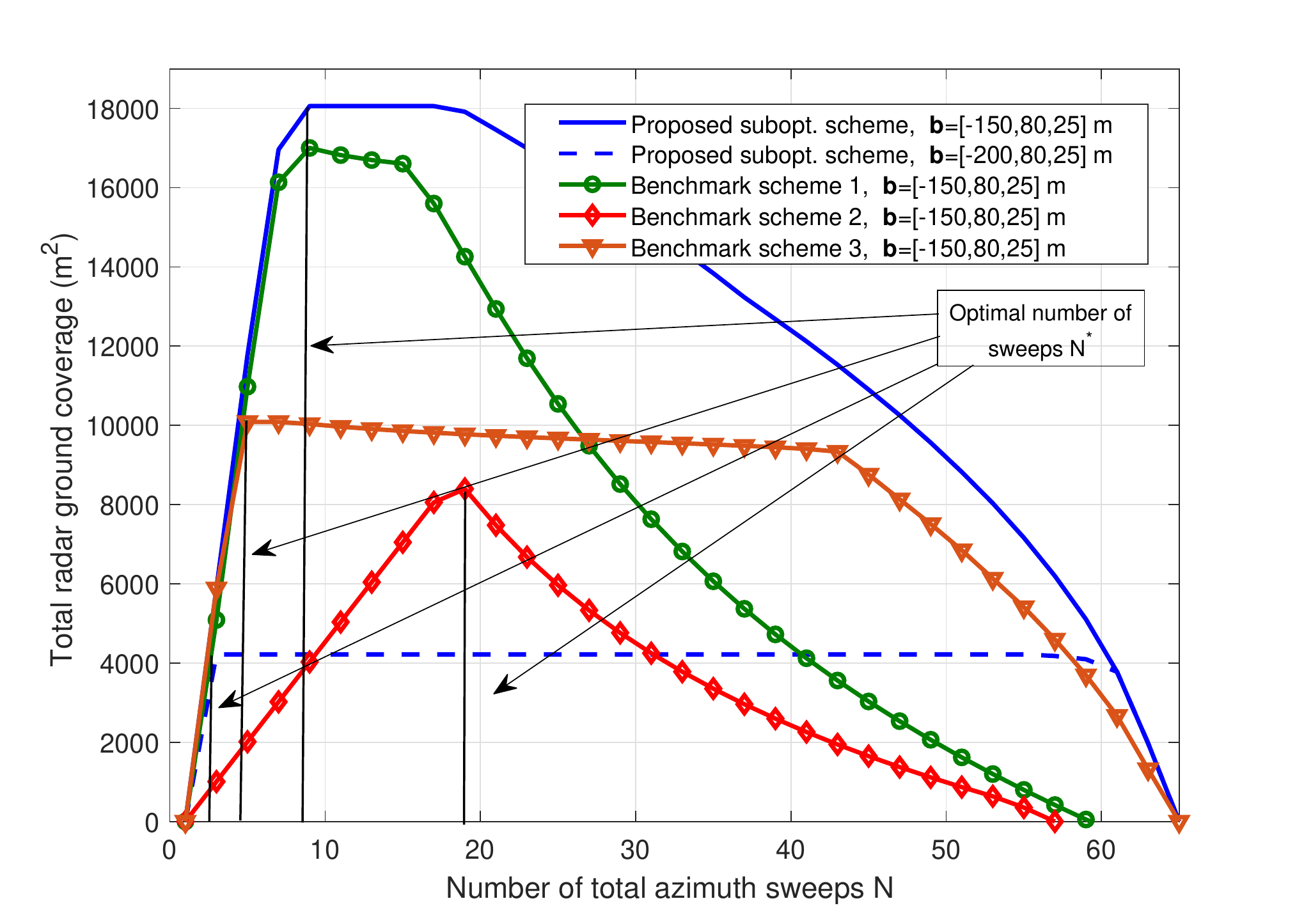}
    \caption{Coverage as function of number of turns $N$.}
    \label{fig:Coverage}
\end{figure}

\vspace{-3mm}
\section{Conclusion}
In this paper, we provided a joint trajectory and resource allocation optimization framework for UAV-SAR systems.  We consider real-time data transmission of the radar data to the ground to facilitate live mission planning. The resource allocation algorithm was formulated as non-convex optimization problem for maximization of the UAV radar coverage. We propose a low-complexity sub-optimal solution based on SCA followed by a search for the optimal number of azimuth sweeps to minimize the drone trajectory length. Our numerical simulations reveal that the proposed 3D path planning scheme clearly outperforms three benchmark schemes with fixed communication power, radar transmit power, and UAV altitude, respectively. Furthermore, our results also show that placing the BS near the AoI is beneficial for maximizing the drone radar coverage.
\vspace{-3mm}

  \appendices
  \section{ Proof of Proposition \ref{proposition}}
  \label{SecondAppendix}
  In this appendix, we prove that the set of feasible $N$ for (P.1) is upper bounded. We start by reformulating constraint C11 as follows: 
  \begin{equation}
     \mathrm{C11}: \delta_t\left(P_{\mathrm{com}}(n)+P_{\mathrm{sar}}(n)+P_{\mathrm{prop}} \right)=q(n) -q(n+1) , \; \forall n. \label{eq:C11}
  \end{equation}
   We use constraint C9 to obtain a lower bound on the left-hand side of (\ref{eq:C11}):
\begin{equation}
\delta_t P_{\mathrm{prop}} \leq q(n) -q(n+1).
  \end{equation}

  Then, we apply a summation to both sides of (\ref{eq:C11}) as follows:
    \begin{equation}
     \sum_{n=1}^{NM-1}\delta_tP_{\mathrm{prop}} \leq \sum_{n=1}^{NM-1}q(n) -q(n+1).
  \end{equation}
We can  rigorously show the following result by induction, which corresponds to the sum of a telescoping series: 
      \begin{equation}
       \sum_{n=1}^{NM-1}q(n) -q(n+1)=q(1)-q(NM). \label{eq:telescop}
        \end{equation}
        Now, we insert (\ref{eq:telescop}) into (\ref{eq:C11}):
        \begin{equation}
           \sum_{n=1}^{NM-1}\delta_t P_{\mathrm{prop}} \leq q_{\mathrm{start}}-q(NM) \leq q_{\rm start} .
        \end{equation}
Since the drone flies at a constant velocity $v$, $P_{\rm prop}$ is constant. Thus, we obtain an upper bound on $N$ as follows: 
        
        \begin{equation}
   N \leq \frac{1}{M} \left(\frac{q_{\mathrm{start}}}{\delta_t P_{\mathrm{prop}}}+1\right). \label{eq:upperboundN}
\end{equation}
Problem (P.2) is infeasible for $N$ exceeding the upper bound provided in (\ref{eq:upperboundN}) as this would violate constraint C11.

\end{document}